\documentclass[journal]{IEEEtran}
\usepackage{fancyhdr,graphicx,amsmath,amssymb, amsfonts}
\usepackage{dsfont}
\usepackage{array}
\usepackage[caption=false,font=normalsize,labelfont=rm,textfont=rm]{subfig}
\usepackage{textcomp}
\usepackage{stfloats}
\usepackage{url}
\usepackage{verbatim}
\usepackage{graphicx}
\usepackage[nocompress]{cite}
\usepackage{xcolor}
\usepackage{amsthm}
\usepackage[ruled,vlined,linesnumbered]{algorithm2e}
\usepackage{nomencl}
\usepackage{algpseudocode}
\usepackage{makecell}
\usepackage{multicol, multirow}
\usepackage{kotex}

\newtheorem{lemma}{Lemma}
\newtheorem{proposition}{Proposition}
\newtheorem{corollary}{Corollary}

\newtheorem{remark}{Remark}
\usepackage[colorlinks=true, linkcolor=blue, citecolor=blue, urlcolor=blue]{hyperref}

\begin{document}
%
\title{Low-Complexity Blind SNR Estimator for mmWave Multi-Antenna Communications}
%
%
%

\author{Hanyoung Park,~\IEEEmembership{Graduate Student Member,~IEEE,}
        Homin Jang, 
        and~Ji-Woong Choi,~\IEEEmembership{Senior Member,~IEEE}
\thanks{This work was supported by Institute of Information \& communications Technology Planning \& Evaluation (IITP) grant funded by the Korea government (MSIT) (No. RS-2024-00442085, No. RS-2024-00398157). 
Part of this work was presented at the 2025 IEEE Global Communications Conference Workshops (GC Wkshps) \cite{ref:confproceeding}. This article extends the authors’ work by (i) including a dynamic adaptation of the threshold parameter, which partitions the signal-relevant and noise-dominated elements, (ii) adding a theoretical description for noise power estimation, and (iii) conducting a hardware implementation.
\textit{(Corresponding author: Ji-Woong Choi.)}}
\thanks{H. Park is with the Department
of Electrical Engineering and Computer Science, Daegu Gyeongbuk Institute of Science and Technology (DGIST), Daegu 42988,
 South Korea (e-mail: prkhnyng@dgist.ac.kr).}
 \thanks{H. Jang and J.-W. Choi are with the Department
of Electrical Engineering and Computer Science, Daegu Gyeongbuk Institute of Science and Technology (DGIST), Daegu 42988,
 South Korea, and also with Brain Engineering Convergence Research Center (BCC), DGIST, Daegu 42988, South Korea (e-mail: jhm7874@dgist.ac.kr; jwchoi@dgist.ac.kr).}
}

\maketitle

\begin{abstract}
In this paper, we propose a low-complexity blind estimator for the average noise power, average signal power, and signal-to-noise ratio (SNR) in millimeter-wave (mmWave) massive multi-antenna uplink systems. In particular, the proposed method is designed to operate using only a single received signal sample, without relying on pilot signals, iterative optimization, or multiple observations, and without requiring prior knowledge of the transmitted signal. By exploiting the inherent sparsity of mmWave channels in the beamspace domain, the estimator identifies noise-dominant components through a sorting-based procedure combined with a finite-difference criterion. This separation is further supported by the order statistics of noise power under Gaussian assumptions, enabling statistically grounded discrimination between signal and noise elements. The average noise power is estimated from the identified noise-only components, and the signal power and SNR are subsequently obtained through simple arithmetic operations. The proposed algorithm achieves low computational complexity and is well-suited for real-time implementation. To demonstrate its practical feasibility, a hardware-efficient very large-scale integration (VLSI) architecture is developed and implemented on a AMD-Xilinx Kintex UltraScale+ KCU116 Evaluation Kit, with corresponding field-programmable gate array (FPGA) results provided. The implementation exhibits low latency and sublinear scaling of hardware resource utilization with respect to the number of antennas, and enables parameter estimation within a duration shorter than a single symbol of conventional wireless systems. Simulation results verify that the proposed estimator achieves high estimation accuracy compared to existing single-sample-based methods. These results demonstrate that the proposed approach enables efficient and real-time blind parameter estimation for mmWave massive multi-antenna systems.
\end{abstract}

\begin{IEEEkeywords}
Blind estimator, noise power, signal-to-noise ratio (SNR), single sample, low-complexity, millimeter-wave (mmWave), beamspace.
\end{IEEEkeywords}

%
\IEEEpeerreviewmaketitle

\section{Introduction}
In recent years, the rapid advancement of wireless technologies has led to a substantial increase in the demand for higher data rates~\cite{ref:ericssonmobreport2025q4}. 
The range of candidate devices in wireless communication systems has significantly expanded beyond conventional smartphones to include emerging applications such as virtual and augmented reality (VR/AR), smartglasses, diverse Internet-of-Things (IoT) devices, real-time industrial automation, and autonomous driving support systems~\cite{ref:ericsson, ref:6g}. 
These applications require not only high data rates but also stringent latency requirements, which fundamentally necessitate significantly wider bandwidth. 
However, the conventional sub-6 GHz frequency bands are already heavily congested and offer limited available spectrum, making it increasingly challenging to meet these requirements within existing frequency resources.
To address this limitation, millimeter-wave (mmWave) systems have been considered a key enabler for next-generation wireless communications due to their ability to provide abundant bandwidth~\cite{ref:mmwavetextbook}. 
In addition, due to the severe path loss inherent to mmWave propagation, large-scale antenna systems, such as massive multi-input multi-output (MIMO), are widely adopted to compensate for this limitation through beamforming gain~\cite{ref:pathloss,ref:mmwavemimo}.

Meanwhile, wireless communication environments are inherently dynamic, and maintaining reliable communication links requires accurate knowledge of various environmental parameters. Such information plays a critical role not only in signal processing techniques, such as channel estimation, equalization, and precoding~\cite{ref:channelest,ref:equalization, ref:precoding}, but also in resource allocation strategies, including scheduling, adaptive modulation and coding (AMC), and power control~\cite{ref:ofdmpower,ref:amc,ref:resourcealloc,ref:scheduling,ref:powercontrol}. In most existing frameworks, these parameters are estimated using dedicated training phases with pilot sequences, which enable accurate system characterization by observing the system via prearranged signals~\cite{ref:3gpp38214, ref:3gpp38211, ref:3gpp38213}.
However, mmWave systems exhibit unique characteristics that make conventional pilot-based approaches less effective~\cite{ref:blind, ref:shortpilot2}. Specifically, mmWave channels suffer from short channel coherence time and limited multipath components, resulting in strong spatial sparsity. At the same time, they are highly susceptible to blockage, which leads to rapid fluctuations in the channel conditions~\cite{ref:mmwavemobility}. Although increasing the frequency of pilot transmission can improve tracking performance, it inevitably introduces significant overhead, thereby reducing the effective data rate, as pilot sequences do not convey information and made of reference signals~\cite{ref:shortpilot}. This issue becomes particularly critical for parameters that vary on a very short timescale, such as the signal-to-noise power ratio (SNR), for which pilot-based tracking can be more inefficient~\cite{ref:ofdm5}.

As an alternative, blind estimation techniques can be considered. 
Blind estimators can estimate the desired system parameters without relying on dedicated training resources, so they do not require additional wireless resource allocation for system observation, leading to improved spectral efficiency and higher achievable rates. 
However, most existing blind estimators rely on multiple observations, iterative optimization procedures, or learning-based approaches~\cite{ref:cnn}. These methods often incur high computational complexity and latency, making them unsuitable for real-time tracking. This limitation is further exacerbated in mmWave massive MIMO systems, where the large number of antennas causes the computational complexity to scale accordingly~\cite{ref:complexity}. Therefore, there is a necessity for a low-complexity blind parameter estimator that can operate using only a single observation while enabling real-time adaptation to rapidly varying wireless environments.
\subsection{Prior Works}
Previous studies have proposed various blind noise power or SNR estimation methods for wireless communication systems. 
Several studies exploit inherent characteristics of orthogonal frequency division multiplexing (OFDM) \cite{ref:ofdm3, ref:ofdm2, ref:ofdm1, ref:ofdm4, ref:ofdm5}.
In particular, the work in \cite{ref:ofdm5} exploits the Gaussian-distributed nature of time-domain OFDM signals to enable blind SNR estimation based on statistical moments.
The work in \cite{ref:ofdm3} utilizes the redundancy introduced by the cyclic prefix (CP), and the works in \cite{ref:ofdm2} and \cite{ref:ofdm1} exploited the periodic structure of CP-OFDM signals via time-varying autocorrelation and block circulant structure by CP, respectively.
The work in \cite{ref:ofdm4} exploits the multichannel structure enabled by MIMO-OFDM systems.

Furthermore, statistical inference-based estimations have been proposed in previous works.
First, higher-order statistics have been utilized for blind SNR estimation using the statistical properties of Gaussian noise, including classic estimators such as the $M_1M_2$ estimator~\cite{ref:m1m2} and the $M_2M_4$ estimator~\cite{ref:m2m40}. 
These approaches have been further refined in various directions \cite{ref:m2m41,ref:m2m42,ref:m2m43,ref:m2m44, ref:m2m45}. 
In particular, the work in \cite{ref:m2m41} generalizes the $M_2M_4$ estimator using recurrence relations among higher-order moments, while the work in \cite{ref:m2m42} improves estimation performance through optimized combinations of moment ratios. 
Extensions to multi-antenna systems are considered in \cite{ref:m2m43}, where higher-order moments are exploited across both spatial and temporal dimensions. 
Joint estimation of the Ricean K-factor and SNR is addressed in \cite{ref:m2m44}, and the work in \cite{ref:m2m45} developed envelope-based estimators applicable to nonconstant modulus constellations.

Additionally, expectation-maximization (EM) algorithm-based approach is also utilized~\cite{ref:em1, ref:em2, ref:em3, ref:em4}. 
The work in \cite{ref:em1} developed an EM-based blind SNR estimation method by treating the transmitted symbols as latent variables. The work in \cite{ref:em2} proposed an EM-based maximum-likelihood (ML) estimator for blind per-antenna SNR estimation. Bellili \textit{et al.} proposed an ML-based approach for SNR estimation over time-varying channels using the EM algorithm~\cite{ref:em3, ref:em4}.

Various other approaches have also been proposed.
Mahmoud \textit{et al.} devised a blind SNR estimation method that obtains the estimate by converting the error vector magnitude (EVM)~\cite{ref:evm}.
Several works proposed deep learning-based blind SNR estimators, with spectrograms of multiple signal samples~\cite{ref:cnn, ref:spectrogram} and constellation diagram~\cite{ref:constdiagram}.
The work in \cite{ref:blindsnrhw} proposed a modulation-specific, hardware-friendly, non-iterative blind SNR estimator.
The works in \cite{ref:cov1} and \cite{ref:cov2} utilized covariance matrices of received signal in MIMO systems.
The work in \cite{ref:lowreschannelest} devised a single-sample-based blind noise power estimator for mmWave massive MIMO systems, but it requires iterative refinement.
However, these methods require multiple snapshots, implying extended observation periods, learning phases, iterative procedures, or computationally intensive operations such as correlation analysis, all of which demand a large number of signal samples and may limit their practical applicability in real-time or resource-constrained environments.
Meanwhile, a low-complexity, single-sample-based blind parameter estimator suitable for mmWave massive MIMO systems has been proposed in \cite{ref:blindest}, but it suffers from limited estimation accuracy when the underlying sparsity is not sufficiently pronounced.

\subsection{Contribution}
The rapid fluctuations of mmWave channels make fast and computationally efficient blind estimation algorithms essential. However, existing methods often require high complexity, substantial computational resources, learning phases, or long observation periods, making them unsuitable for real-time parameter tracking in mmWave large-scale multi-antenna systems due to latency and scalability constraints.
To address these limitations, in this paper, we propose low-complexity blind estimators for average noise power, average signal power, and SNR that do not rely on iterative optimization, learning-based techniques, or long observation windows. The proposed method exhibits beamspace sparsity, an inherent property of mmWave large-scale multi-antenna systems, to estimate the desired parameters using only a single signal sample. 
Furthermore, the proposed estimators are amenable to efficient very large-scale integration (VLSI) implementation, and their practicality is demonstrated through hardware-oriented design and VLSI-based validation.
The main contributions of this work are summarized as follows:
\begin{itemize}
    \item We propose a novel blind noise power estimator that utilizes the inherent sparsity of the mmWave channel in the beamspace domain and statistical characteristics of Gaussian noise. Additionally, we derive the average signal power estimate and the SNR estimate using the estimated noise power. 

    \item The proposed method leverages the sparsity of signals in the beamspace domain, interpreting signal components as outliers relative to noise-only elements. Specifically, the element-wise squared magnitudes are sorted, and the finite difference among these values is used to distinguish noise-only elements from signal-bearing elements. In the separation process, the relationship between Gaussian and exponential distributions, along with the order statistics of the exponential distribution, is exploited.

    \item A VLSI architecture is developed, and corresponding field-programmable gate array (FPGA) implementation results are presented. Furthermore, the proposed method is refined into a hardware-efficient structure by hardware-aware simplifications. The implementation results demonstrate that the proposed estimators achieve low-latency operation with high hardware efficiency.
\end{itemize}

\subsection{Paper Outline}
The rest of this paper is organized as follows. In Section~\ref{sec:systemmodel}, the system model is introduced. In Section~\ref{sec:proposed}, the proposed low-complexity blind estimators are devised. In Section~\ref{sec:simulation}, simulation results and their analyses are described. In Section~\ref{sec:hardware}, the VLSI design is proposed and the corresponding analyses are conducted.  In Secion~\ref{sec:conclusion}, this paper is concluded. 

\subsubsection*{Notation}
Boldface lowercase and uppercase letters denote vectors and matrices, respectively. For an arbitrary vector $\mathbf{a}$, its $i$-th element is denoted by $a_i$. $(\cdot)^T$ and $(\cdot)^H$ mean transpose and Hermitian transpose, respectively. $\mathbf{I}_N$ is $N\times N$ identity matrix. $|\mathbf{a}|$ and $|\mathbf{a}|^2$ present element-wise absolute-valued vector and element-wise squared absolute-valued vector for $\mathbf{a}$, respectively. $\|\cdot\|$ means the $\ell_2$ norm. $\|\cdot\|_0$ denotes the $\ell_0$ norm, which is defined as the number of nonzero elements.
$\mathrm{Exp}(\lambda)$ denotes exponential distribution with rate parameter $\lambda$, with the probability density function (PDF)
\begin{equation}
    f_\mathrm{Exp}(x, \lambda) = \begin{cases}
        \lambda e^{-\lambda x} & x\geq 0, \\
        0 & x<0.
    \end{cases}
\end{equation}
Calligraphic letters denote sets, and the cardinality of an arbitrary set $\mathcal{A}$ is denoted by $|\mathcal{A}|$.

\section{System Model}\label{sec:systemmodel}
In our model, we consider a single-cell mmWave single-input multiple-output (SIMO) uplink system. We assume that the base station (BS) is equipped with $M$-element uniform linear array (ULA).
Based on this assumption, the received signal vector $\mathbf{y}\in\mathbb{C}^M$ is presented as
\begin{equation}\label{eqn:systemmodel}
    \mathbf{y}=\mathbf{h}s+\mathbf{n}=\mathbf{x}+\mathbf{n},
\end{equation}
where $\mathbf{h}\in\mathbb{C}^M$ is the uplink channel vector, $s\in\mathbb{C}$ is the transmitted symbol value, $\mathbf{x}\in\mathbb{C}^M$ is the noiseless transmitted signal vector, and $\mathbf{n}\sim\mathcal{CN}(0, N_0 \mathbf{I}_M)$ is the additive white Gaussian noise (AWGN) vector.
Here, the channel vector is determined with the incident angle as
\begin{equation}
    \mathbf{h}=\sum _{\ell=1}^L g_{\ell} \mathbf{a}(\phi_\ell),
\end{equation}
where $L$ is the number of multipaths including a potential line-of-sight (LoS) path, $g_\ell$ is the complex channel gain, $\mathbf{a}(\cdot)\in\mathbb{C}^M$ is the steering vector, and $\phi_\ell$ is the spatial frequency which corresponds to the incident angle of $\ell$-th dominant propagation path.
Note that $L$ is small because the number of multipaths is limited due to the inherent characteristics of the mmWave propagation environment, which has limited scattering~\cite{ref:mmwavetextbook}.
The steering vector for an arbitrary spatial frequency $\phi$ is presented as
\begin{equation}
    \mathbf{a}(\phi) = [1, e^{-j\pi\phi}, e^{-j2\pi\phi} \cdots, e^{-j(M-1)\pi\phi}]^T.
\end{equation}
To utilize the spatial sparsity of the mmWave channel, it is transformed into beamspace domain. To this end, spatial discrete Fourier transform (DFT) is applied to the antenna-domain signal, and the received signal in the beamspace domain is presented as
\begin{equation}
    \bar{\mathbf{y}} = \mathbf{Fy} = \mathbf{Fh}s+\mathbf{Fn} = \mathbf{Fx}+\mathbf{Fn}
    = \bar{\mathbf{h}}s + \bar{\mathbf{n}} = \bar{\mathbf{x}}+\bar{\mathbf{n}},
\end{equation}
where $\mathbf{F}\in\mathbb{C}^{M\times M}$ is the unitary DFT matrix, $\bar{\mathbf{h}}$ is the beamspace channel vector, and $\bar{\mathbf{x}}$ and $\bar{\mathbf{n}}$ are the noiseless signal vector and AWGN vector in the beamspace domain, respectively. 
Note that $\bar{\mathbf{n}}$ is still AWGN because unitary DFT is unitary linear transformation. 
Also, $\|\mathbf{h}\|=\|\bar{\mathbf{h}}\|$ and $\|\mathbf{x}\|=\|\bar{\mathbf{x}}\|$ because $\mathbf{F}$ is unitary DFT.
By the beamspace sparsity of mmWave channel, $\bar{\mathbf{h}}$ is sparse, and it implies the sparsity of $\bar{\mathbf{x}}$~\cite{ref:mmwavetextbook}.

\section{Blind Estimators}\label{sec:proposed}
In this section, we first present the mathematical formulation of the system parameters of interest, i.e., the average noise power, average signal power, and SNR. 
Subsequently, we propose low-complexity blind estimators for these system parameters.
\subsection{Parameter Definitions}
In the considering system model, the average noise power is defined as
\begin{equation}\label{eqn:noisepowerdef}
    N_0 = \frac{1}{M} \mathbb{E}[\|\mathbf{n}\|^2].
\end{equation}
Also, the average signal power is defined as
\begin{equation}\label{eqn:signalpowerdef}
    P_x = \frac{1}{M} \mathbb{E}[\|\mathbf{x}\|^2].
\end{equation}
Based on the definitions of average noise power and signal power in \eqref{eqn:noisepowerdef} and \eqref{eqn:signalpowerdef}, the SNR $\rho$ is given by
\begin{equation}
    \rho = \frac{P_x}{N_0} = \frac{\mathbb{E}[\|\mathbf{x}\|^2]}{\mathbb{E}[\|\mathbf{n}\|^2]}.
\end{equation}

\subsection{Average Noise Power Estimator}
Here, we propose a low-complexity blind average noise power estimator, which is derived in Algorithm~\ref{alg:noisepwrest}.
The proposed method exploits a key characteristic of mmWave channels, namely spatial sparsity in the beamspace domain~\cite{ref:mmwavetextbook}. In particular, only a small subset of beamspace components contains significant signal power, while the majority exhibit negligible magnitudes, as illustrated in Fig.~\ref{fig:beamspace_channel}.
Based on this observation, the noiseless beamspace signal vector $\bar{\mathbf{x}}$ is predominantly composed of near-zero elements. Consequently, its noisy observation $\bar{\mathbf{y}}$ is expected to have a power distribution that is largely dominated by noise, except for a few components with strong signal contributions.
To leverage this property, we first sort the squared magnitudes of the noisy beamspace observations in ascending order. Here, we denote $|\bar{\mathbf{y}}|^2_\mathrm{sorted}\in\mathbb{R}^M$ as the sorted power vector. Subsequently, we classify a small number of the largest elements as signal-bearing components, while treating the remaining elements as noise-only. Finally, the average noise power is estimated by computing the mean power of these noise-only components.

\begin{figure}
    \centering
    \includegraphics[width=\linewidth]{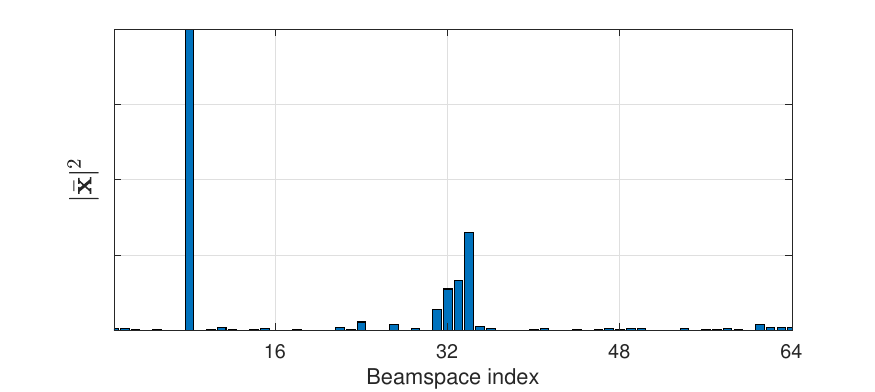}
    \caption{Squared magnitudes of the beamspace signal vector. The channel is generated with QuaDRiGa mmMAGIC model~\cite{ref:quadriga} for a carrier frequency of 50 GHz and a 64-element ULA with half-wavelength antenna spacing.}
    \label{fig:beamspace_channel}
\end{figure}

To estimate the average noise power using the proposed approach, it is necessary to establish a criterion for distinguishing between signal and noise components. This criterion is motivated by the observation that signal-dominant samples do not follow the same complex Gaussian distribution as noise and can therefore be regarded as statistical outliers. To this end, we exploit the statistical properties of Gaussian noise. For hardware-efficient implementation, we adopt a finite difference operation, which does not require complex operations such as multiplication or division.
Specifically, the finite difference is defined as
\begin{equation}
    \Delta [|\bar{\mathbf{y}}|^2_\mathrm{sorted}]_m = [|\bar{\mathbf{y}}|^2_\mathrm{sorted}]_{m+1} - [|\bar{\mathbf{y}}|^2_\mathrm{sorted}]_m,    
\end{equation}
where $[|\bar{\mathbf{y}}|^2_\mathrm{sorted}]_m$ is the $m$-th element of the sorted vector. 
This allows us to explicitly characterize the distribution of the noise component.
Since we assume that the noise is AWGN, the element-wise squared magnitudes of $\bar{\mathbf{n}}$ follow an i.i.d. exponential distribution, i.e., $|\bar{n}_m|^2\overset{\mathrm{i.i.d.}}{\sim}\mathrm{Exp}(1/N_0)$.
Leveraging this property, we further consider the order statistics, which show a well-known property of the exponential distribution.
Specifically, let $|\bar{\mathbf{n}}|^2_\mathrm{sorted}$ denote the vector of squared magnitudes of $\bar{\mathbf{n}}$, sorted in ascending order.
Then, it satisfies
\begin{equation}
    [|\bar{\mathbf{n}}|^2_\mathrm{sorted}]_1 \leq [|\bar{\mathbf{n}}|^2_\mathrm{sorted}]_2 \leq \cdots \leq [|\bar{\mathbf{n}}|^2_\mathrm{sorted}]_M,
\end{equation}
where $[|\bar{\mathbf{n}}|^2_\mathrm{sorted}]_m$ is the $m$-th smallest element.
We define its $m$-th finite difference as
\begin{equation}
    \Delta [|\bar{\mathbf{n}}|^2_\mathrm{sorted}]_m = [|\bar{\mathbf{n}}|^2_\mathrm{sorted}]_{m+1} - [|\bar{\mathbf{n}}|^2_\mathrm{sorted}]_m.    
\end{equation}
Under the assumption that each element of $|\bar{\mathbf{n}}|^2$ follows an exponential distribution, the statistical behavior of these finite differences can be characterized using the order statistics of exponential random variables. In particular, the finite differences are independent and follow exponential distributions with decreasing rates, as formally stated in the Lemma~\ref{lem:orderstat}.
\begin{lemma}\label{lem:orderstat}
Let $X_1, \dots, X_M \overset{\mathrm{i.i.d.}}{\sim} \mathrm{Exp}(\lambda)$, and denote the ordered samples by
$X_{(1)} \le \cdots \le X_{(M)}$.
Define the finite differences as
\begin{equation}
    Y_m = X_{(m+1)} - X_{(m)}, \quad m=1,\dots,M-1.
\end{equation}
Then, $\{Y_m\}$ are mutually independent and
\begin{equation}
    Y_m \sim \mathrm{Exp}((M-m)\lambda).
\end{equation}
\end{lemma}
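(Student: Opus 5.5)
Lemma~\ref{lem:orderstat} is the classical Rényi representation of exponential order statistics. We will prove it by a change of variables in the joint density of the ordered sample. Setting $X_{(0)}:=0$ extends the definition to $Y_0=X_{(1)}$, which is convenient for the Jacobian; the statement concerns only $m\ge 1$, and $Y_0$ is marginalized out at the end.

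The starting point is the joint density of the order statistics of $M$ i.i.d.\ $\mathrm{Exp}(\lambda)$ variables, which is
\begin{equation}
f(x_1,\dots,x_M) = M!\,\lambda^M \exp\Big(-\lambda\sum_{i=1}^M x_i\Big), \quad 0\le x_1\le\cdots\le x_M .
\end{equation}
This follows because each of the $M!$ orderings of the unordered sample contributes equally, and the i.i.d.\ product density is symmetric.

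Next, apply the linear map $(x_1,\dots,x_M)\mapsto(y_0,\dots,y_{M-1})$ with $y_0=x_1$ and $y_m=x_{m+1}-x_m$. Its inverse is $x_k=\sum_{j=0}^{k-1} y_j$. The map is unit lower triangular, so the Jacobian equals $1$. The ordered region maps exactly onto the orthant $y_j\ge 0$. The key identity is rewriting the exponent:
\begin{equation}
\sum_{k=1}^M x_k = \sum_{k=1}^M\sum_{j=0}^{k-1} y_j = \sum_{j=0}^{M-1}(M-j)\,y_j ,
\end{equation}
since each $y_j$ appears in $x_k$ exactly for $k=j+1,\dots,M$. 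This counting step is the main point to check carefully; everything else is bookkeeping.

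It follows that the joint density of $(Y_0,\dots,Y_{M-1})$ is
\begin{equation}
M!\,\lambda^M\prod_{j=0}^{M-1} e^{-(M-j)\lambda y_j} = \prod_{j=0}^{M-1}(M-j)\lambda\, e^{-(M-j)\lambda y_j},
\end{equation}
on the nonnegative orthant. The equality uses $\prod_{j=0}^{M-1}(M-j)=M!$. The density therefore factorizes into $\mathrm{Exp}((M-j)\lambda)$ densities, so the spacings are mutually independent with the stated rates. Marginalizing out $Y_0$ preserves both the independence of $Y_1,\dots,Y_{M-1}$ and the distribution $Y_m\sim\mathrm{Exp}((M-m)\lambda)$.

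As a sanity check, the same result follows from memorylessness. Given $X_{(m)}$, the $M-m$ remaining variables exceed it by i.i.d.\ $\mathrm{Exp}(\lambda)$ amounts, and their minimum is $\mathrm{Exp}((M-m)\lambda)$ independent of the past. The Jacobian argument is preferable for the write-up because it gives mutual independence in one step.
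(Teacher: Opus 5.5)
Your proof is correct and complete. The joint order-statistic density $M!\,\lambda^M e^{-\lambda\sum_i x_i}$, the unit-triangular change of variables to spacings, the counting identity $\sum_{k} x_k=\sum_{j}(M-j)y_j$, and the factorization via $\prod_{j=0}^{M-1}(M-j)=M!$ all check out. The paper gives no argument of its own and only cites a standard order-statistics reference, where this R\'enyi-representation computation is the standard proof. Keeping $Y_0=X_{(1)}\sim\mathrm{Exp}(M\lambda)$ also gives you $\mathbb{E}[X_{(k)}]=\lambda^{-1}\sum_{j=M-k+1}^{M} j^{-1}$ directly, which is the expectation formula the paper uses right after the lemma.
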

\begin{proof}
    Please refer to \cite{ref:orderstat}.
\end{proof}
\noindent Accordingly, by substituting $1/N_0$ into $\lambda$, the finite difference can be characterized as 
\begin{equation}\label{eqn:differencedistribution}
    \Delta [|\bar{\mathbf{n}}|^2_\mathrm{sorted}]_m \sim \mathrm{Exp}\left(\frac{M-m}{N_0}\right),    
\end{equation}
and the corresponding mean and variance are given by $N_0/(M-m)$ and $N_0^2/(M-m)^2$, respectively.
Based on these properties, an intuitive approach is to identify outliers, that is, signal-containing elements, by measuring the deviation from the expected mean. However, this metric cannot be directly utilized since it requires prior knowledge of $N_0$, which is the goal of this estimation. Accordingly, in this work, we adopt a heuristic surrogate to replace this criterion without requiring explicit knowledge of $N_0$. 
To this end, we compare the finite difference with the product of a threshold parameter $\gamma$ and a temporary average, defined as the average power of the elements examined so far, which can be expressed as
\begin{equation}
    \hat{\mu}_m = \frac{1}{m}\sum_{m^\prime=1}^m [|\bar{\mathbf{y}}|^2_\textnormal{sorted}]_{m^\prime}.
\end{equation}
Assuming that only noise is present, i.e., $\hat{\mu}_m = \frac{1}{m}\sum_{i=1}^m [|\bar{\mathbf{n}}|^2_\textnormal{sorted}]_{m}$, the following relationship holds due to the properties of order statistics of the exponential distribution~\cite{ref:orderstat}:
\begin{equation}
    \mathbb{E}\left[\sum_{m^\prime=1}^m [|\bar{\mathbf{n}}|^2_\textnormal{sorted}]_{m^\prime}\right]
    = N_0 \sum_{k=1}^m \sum_{j=M-k+1}^M j^{-1}.
\end{equation}
Using the noise-only sample mean $\hat{\mu}_m$, this can be rewritten as
\begin{equation}
    N_0 = \frac{\hat{\mu}_m}{\sum_{k=1}^m \sum_{j=M-k+1}^M j^{-1}}.
\end{equation}
Substituting this into \eqref{eqn:differencedistribution}, we obtain
\begin{equation}\label{eqn:deltadistribution}
    \Delta [|\bar{\mathbf{n}}|^2_\mathrm{sorted}]_m \sim
    \mathrm{Exp}\left(\frac{(M-m)\sum_{k=1}^m \sum_{j=M-k+1}^M j^{-1}}{\hat{\mu}_m}\right).
\end{equation}
Based on this, the probability that the finite difference exceeds a given value can be evaluated, and events with sufficiently small probability can be regarded as rare occurrences. For an arbitrary observation $\delta$, the probability that this value is exceeded can be expressed as
\begin{equation}\label{eqn:probexceed}
\begin{aligned}
    &\mathrm{Pr}(\Delta [|\bar{\mathbf{n}}|^2_\mathrm{sorted}]_m \geq \delta) \\&
    = \exp\left(-\frac{(M-m)\sum_{k=1}^m \sum_{j=M-k+1}^M j^{-1}}{\hat{\mu}_m} \delta\right).
\end{aligned}
\end{equation}
If a rare occurrence is defined as an event with probability less than or equal to $\alpha$, the threshold at the $m$-th index is given by
\begin{equation}\label{eqn:thresholdderiv}
    \delta_m = -\frac{\hat{\mu}_m  \ln \alpha }{(M-m)\sum_{k=1}^m \sum_{j=M-k+1}^M j^{-1}}.
\end{equation}
Here, precise computation of this value at every step improves the reflection of the expected mean of the finite difference, but such a multiplication operation is computationally burdensome in hardware implementation.
On the other hand, using a fixed value fails to capture the variation of the expected finite difference as $m$ increases. Therefore, we adopt a piecewise constant design of $\gamma$ with three levels depending on the search index $m$, as
\begin{equation}
    \gamma = \begin{cases}
        \gamma_1,\quad\textnormal{if }m\in[1, M_1], \\
        \gamma_2,\quad\textnormal{if }m\in(M_1, M_2], \\
        \gamma_3,\quad\textnormal{otherwise.}
    \end{cases}
\end{equation}
Each value is represented by the median of \eqref{eqn:thresholdderiv} within the corresponding interval, and is quantized to powers of two to facilitate hardware implementation.
Based on the proposed thresholding strategy, a detection rule is defined to identify the transition point between noise-dominant and signal-dominant regions. Specifically, a \textit{hit} is declared at the first index $m$ that satisfies
\begin{equation}\label{eqn:origthresholdcheck}
    \Delta [|\bar{\mathbf{y}}|^2_\mathrm{sorted}]_m \geq \gamma \times \hat{\mu}_m.
\end{equation}
The corresponding index is denoted as $m^*$, which represents the boundary separating noise and signal components.
That is, with the ascending ordering, we may assume the terms meeting \eqref{eqn:origthresholdcheck} includes signal component while the previous terms counts in only the noise terms.
However, computing $\hat{\mu}_m$ at each iterative step $m$ may impose a significant computational burden. To mitigate this issue, the temporary mean $\hat{\mu}$ can be updated recursively using the value from the previous step, which can be expressed as
\begin{equation}
    \hat{\mu}_m = (\hat{\mu}_{m-1} \times (m-1) + [|\bar{\mathbf{y}}|^2_\mathrm{sorted}]_m) \times \frac{1}{m}.
\end{equation}
Still, this approach incurs a significant computational burden in terms of hardware implementation, as it requires two multiplication operations.
To address this issue, we reformulate the hit detection criterion using a surrogate inequality based on the cumulative sum, which is already required for signal power estimation, as
\begin{equation}\label{eqn:thresholdcheck}
    m\times \Delta [|\bar{\mathbf{y}}|^2_\mathrm{sorted}]_m \geq \gamma \times S_m,
\end{equation}
where $S_m$ is the temporary cumulative sum up to index $m$, given by
\begin{equation}
    S_m = m \hat{\mu}_m,
\end{equation}
and $S_M=\|\mathbf{y}\|^2$. With this approach, the need for two multiplications is reduced to one, as the temporary cumulative sum can be updated as
\begin{equation}
    S_m = S_{m-1} + [|\bar{\mathbf{y}}|^2_\mathrm{sorted}]_m,
\end{equation}
which reduces two multiplications and requires only addition, and the surrogate hit detection requires one new multiplication for $m\times \Delta [|\bar{\mathbf{y}}|^2_\mathrm{sorted}]_m$.

Using this boundary, the elements $\{1,...,m^*\}$ are regarded as noise-only elements, and the noise power is estimated as
\begin{equation}
    \widehat{N}_0 = \frac{S_{m^*}}{m^*} = \hat{\mu}_{m^*} = \frac{1}{m^*} \sum_{m=1}^{m^*} [|\bar{\mathbf{y}}|^2_\mathrm{sorted}]_m.
\end{equation}
The remaining elements are treated as signal-containing components and can be further utilized for subsequent processing, such as signal power or SNR estimation.
Finally, in the absence of any detected \textit{hit}, which may occur in low-SNR or signal-absent scenarios, a fallback strategy is employed by setting $m^* = M$, resulting in the full average being used as the noise power estimate.

\begin{algorithm}[t]
\caption{Blind Average Noise Power Estimator}\label{alg:noisepwrest}
\SetAlgoLined
\textbf{input:} $|\bar{\mathbf{y}}|^2_\mathrm{sorted}$

\textbf{initialize} $\texttt{hit} = \texttt{false}, S_0=0$.

\For{$m=1,...,M$}{ 
    Calculate the temporary sum of power
    \begin{equation}
        S_m = S_{m-1} + [|\bar{\mathbf{y}}|^2_\mathrm{sorted}]_m.\nonumber
    \end{equation}
    \eIf{$m<M$}{
        Calculate the finite difference of index $m$ 
        \begin{equation}
            \Delta [|\bar{\mathbf{y}}|^2_\mathrm{sorted}]_m = [|\bar{\mathbf{y}}|^2_\mathrm{sorted}]_{m+1} - [|\bar{\mathbf{y}}|^2_\mathrm{sorted}]_m.\nonumber
        \end{equation}

        \If{$m\times \Delta [|\bar{\mathbf{y}}|^2_\mathrm{sorted}]_m \geq \gamma\times S_m$}{
            $\texttt{hit} = \texttt{true}.$
            
            $m^*=m.$

        }
    } {
        \If{$\textnormal{\texttt{hit} = \texttt{false}}$} {
        $m^*=M$.
        }
    }
}

$\hat{\mu}_{m^*} = \frac{S_{m^*}}{m^*}.$

{$\widehat{N}_0=\hat{\mu}_{m^*}$.}

\Return {Noise power estimate $\widehat{N}_0$, total power $S_M$.}    

\end{algorithm}

The computational complexity of the proposed average noise power estimator is analyzed as follows. The algorithm consists of \textit{(i) DFT}, \textit{(ii) sorting}, \textit{(iii) finite difference computation}, \textit{(iv) search for $m^*$}, and \textit{(v) calculation of the mean power of the elements distinguished as noise}. 
A naive implementation would require repeated computation of the temporary mean at each iteration, leading to excessive complexity. 
To address this issue, steps \textit{(iii)–(v)} are integrated into a single pass, resulting in a linear complexity $\mathcal{O}(M)$. The DFT in \textit{(i)} can be efficiently implemented using the fast Fourier transform (FFT) with complexity $\mathcal{O}(M\log M)$, while the sorting step in (ii) can be performed with complexity $\mathcal{O}(M \log M)$ if it is conducted with Quicksort~\cite{ref:quicksort}. Consequently, the overall computational complexity of the proposed estimator in big-O notation is $\mathcal{O}(M\log M)$. Notably, the proposed method estimates the average power of the elements that are classified as the noise elements in the beamspace domain, without requiring prior knowledge of the ground-truth signal or pilot sequences.

\subsection{Average Signal Power Estimator and SNR Estimator}
Since the noise $\mathbf{n}$ is assumed to be AWGN, it is uncorrelated with the signal $\mathbf{x}$. Accordingly, the following relation holds:
\begin{equation}
    \|\mathbf{y}\|^2 = \|\mathbf{x}\|^2 + \|\mathbf{n}\|^2.
\end{equation}
Therefore, average signal power can be readily obtained given the average noise power. Specifically, the estimate of the average signal power can be derived from the estimate of the average noise power as
\begin{equation}\label{eqn:signalpowerest}
    \widehat{P}_x = \max\left\{ \frac{1}{M}\|\bar{\mathbf{y}}\|^2 - \widehat{N}_0, 0  \right\}.
\end{equation}
Furthermore, since SNR is defined as the ratio of the average signal power to the average noise power, the SNR estimate can be obtained as
\begin{equation}\label{eqn:snrest}
    \hat{\rho} = \frac{\widehat{P}_x}{\widehat{N}_0}.
\end{equation}

\begin{remark}
    The proposed estimators do not require knowledge of $\mathbf{s}$. Accordingly, these estimators can be operated in blind scenarios. Additionally, they do not require a training phase for machine learning or iterative optimization. Furthermore, they do not require multiple snapshots of the received signal, which reduces the complexity and feasibility of real-time parameter tracking.
\end{remark}
\begin{remark}
    The proposed estimators for the average signal power require the estimated noise power $\widehat{N}_0$. It means that these estimators can be considered as the extended applications of the noise power estimator. Moreover, this suggests that their estimation performance is influenced by that of the noise power estimator.
\end{remark}
\begin{remark}
    The system model described in \eqref{eqn:systemmodel} assumes a narrowband channel, which implies a flat-fading channel, but the proposed estimators can also be applied for wideband channels with frequency selectivity through subcarrier-wise estimation.
\end{remark}

\subsection{Theoretical Analysis}
To analyze the properties of the proposed noise power estimator, we conduct a theoretical analysis based on scenarios with an ideal sparse vector. This analysis also serves as a foundation for interpreting the simulation results presented later. First, we investigate the behavior of the estimator when an incorrect separation index is employed, i.e., when the estimator utilizes an index different from the optimal one.

\begin{lemma}\label{lem:monotone}
Consider $|\bar{\mathbf{y}}|^2_{\mathrm{sorted}}$ and $\hat{\mu}_m$ for ideally sparse vector $\bar{\mathbf{x}}_\mathrm{ideal}$ which satisfies $\|\bar{\mathbf{x}}_\mathrm{ideal}\|_0\ll M$ and an arbitrary index $m \in \{1,\ldots,M\}$.
Let $m_0$ denote the true boundary index between the noise-only and signal-bearing components, which is not available in practice. The oracle noise power estimator, defined as the estimator that uses this true boundary, is given by
\begin{equation}
    \hat{N}_0^{\mathrm{ora}} := \hat{\mu}_{m_0}
    =
    \frac{1}{m_0}\sum_{m=1}^{m_0}\left[|\bar{\mathbf{y}}|^2_{\mathrm{sorted}}\right]_m.
\end{equation}
Then, the following holds:
\begin{equation}
    \hat{N}_0 \leq \hat{N}_0^{\mathrm{ora}}, \quad \text{if } m^* < m_0,
\end{equation}
\begin{equation}
    \hat{N}_0 \geq \hat{N}_0^{\mathrm{ora}}, \quad \text{if } m^* > m_0.
\end{equation}
\end{lemma}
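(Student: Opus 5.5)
The claim reduces to a monotonicity property of running means of a sorted sequence. Write $a_m := [|\bar{\mathbf{y}}|^2_{\mathrm{sorted}}]_m$, so that $a_1 \le a_2 \le \cdots \le a_M$ and $\hat{\mu}_m = \frac{1}{m}\sum_{m'=1}^m a_{m'}$. The estimator is $\widehat{N}_0 = \hat{\mu}_{m^*}$ and the oracle is $\hat{N}_0^{\mathrm{ora}} = \hat{\mu}_{m_0}$. Both inequalities therefore follow once we show that $m \mapsto \hat{\mu}_m$ is non-decreasing on $\{1,\ldots,M\}$. This holds for every realization of the sorted vector, so no probabilistic argument is needed. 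The sparsity assumption $\|\bar{\mathbf{x}}_\mathrm{ideal}\|_0 \ll M$ only serves to make $m_0$ well defined as the boundary between the noise-only and signal-bearing components.

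The key step is the one-step increment. Using the recursion stated before Algorithm~\ref{alg:noisepwrest}, we have $\hat{\mu}_{m+1} = \frac{m\hat{\mu}_m + a_{m+1}}{m+1}$, and hence
\begin{equation}
    \hat{\mu}_{m+1} - \hat{\mu}_m = \frac{a_{m+1} - \hat{\mu}_m}{m+1}.
\end{equation}
Since $\hat{\mu}_m$ is an average of $a_1,\ldots,a_m$, each of which is at most $a_m \le a_{m+1}$, we get $\hat{\mu}_m \le a_m \le a_{m+1}$. The increment is therefore nonnegative.

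Telescoping over consecutive indices then gives $\hat{\mu}_{m} \le \hat{\mu}_{m'}$ whenever $m \le m'$. Taking $m = m^*$ and $m' = m_0$ in the case $m^* < m_0$ yields $\widehat{N}_0 \le \hat{N}_0^{\mathrm{ora}}$. Taking $m = m_0$ and $m' = m^*$ in the case $m^* > m_0$ yields $\widehat{N}_0 \ge \hat{N}_0^{\mathrm{ora}}$. The fallback case $m^* = M$ from Algorithm~\ref{alg:noisepwrest} is covered by the second case, because it simply corresponds to $m^* = M > m_0$.

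The only real subtlety is interpretive rather than technical. The inequalities concern realized values of the estimators, not their expectations, and no property of the noise distribution is used. One point needs care: the paper refers to both $\hat{\mu}_m$ and the cumulative-sum form $S_m/m$. These are equal by definition, so the argument applies to the estimator exactly as Algorithm~\ref{alg:noisepwrest} computes it.
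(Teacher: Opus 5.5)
Your proof is correct and matches the paper's argument: both show that the running mean $\hat{\mu}_m$ of the sorted sequence is non-decreasing, using the one-step recursion together with the bound $\hat{\mu}_{m-1}\le[|\bar{\mathbf{y}}|^2_{\mathrm{sorted}}]_m$, and then compare $\hat{\mu}_{m^*}$ with $\hat{\mu}_{m_0}$. You also correctly note that the result holds deterministically for every realization, needing neither positivity nor any property of the noise distribution.
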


\begin{proof}
Please refer to Appendix \ref{app:prooflemma}.
\end{proof}

This lemma theoretically characterizes the intuitive underestimation and overestimation phenomena that arise when the separation index is chosen before or after the true index, respectively. These observations will later be used to explain the tendencies observed in the simulation results in Sec.~\ref{sec:simulation}.

Meanwhile, under the ideal condition and by leveraging the i.i.d. property of AWGN, the proposed estimator satisfies the following property:

\begin{proposition}\label{prop:unbiased}
Under the ideal sparse vector assumption in Lemma \ref{lem:monotone}, let
\begin{equation}
    k = \|\bar{\mathbf{x}}_{\mathrm{ideal}}\|_0,
\end{equation}
and let $\mathcal I_n$ and $\mathcal I_s$ denote the index sets of noise-only and signal-bearing components, respectively, where $|\mathcal I_n|=M-k$.

Assume perfect separation, i.e.,
\begin{equation}
    \max_{i\in\mathcal I_n} |\bar y_i|^2
    <
    \min_{j\in\mathcal I_s} |\bar y_j|^2.
\end{equation}
Then, the oracle noise power estimator
\begin{equation}
    \hat N_0^{\mathrm{ora}}
    =
    \frac{1}{M-k}\sum_{m=1}^{M-k}[|\bar{\mathbf y}|^2_{\mathrm{sorted}}]_m
\end{equation}
is unbiased under the AWGN assumption, i.e.,
\begin{equation}
    \mathbb E[\hat N_0^{\mathrm{ora}}]=N_0.
\end{equation}
\end{proposition}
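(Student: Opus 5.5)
The plan is to use perfect separation to replace the sum of the smallest $M-k$ sorted entries by a sum over a fixed index set, and then take expectations term by term. The key observation is that, under the ideal sparse model, $\bar x_i = 0$ for every $i\in\mathcal I_n$. Hence $|\bar y_i|^2 = |\bar n_i|^2$ on $\mathcal I_n$.

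\textbf{Step 1 (reduce the sorted sum to a fixed-index sum).} The separation condition $\max_{i\in\mathcal I_n}|\bar y_i|^2<\min_{j\in\mathcal I_s}|\bar y_j|^2$ says that every noise-only entry is strictly smaller than every signal-bearing entry. Therefore the $M-k$ smallest entries of $|\bar{\mathbf y}|^2$ are exactly the entries indexed by $\mathcal I_n$, and
\begin{equation*}
\sum_{m=1}^{M-k}[|\bar{\mathbf y}|^2_{\mathrm{sorted}}]_m=\sum_{i\in\mathcal I_n}|\bar y_i|^2=\sum_{i\in\mathcal I_n}|\bar n_i|^2 .
\end{equation*}
In particular, $m_0=M-k$ and the oracle estimator is $\frac{1}{M-k}\sum_{i\in\mathcal I_n}|\bar n_i|^2$.

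\textbf{Step 2 (take expectations).} Since $\mathbf F$ is unitary, $\bar{\mathbf n}\sim\mathcal{CN}(0,N_0\mathbf I_M)$, so $|\bar n_i|^2\sim\mathrm{Exp}(1/N_0)$ with mean $N_0$. The set $\mathcal I_n$ is the support complement of $\bar{\mathbf x}_{\mathrm{ideal}}$, which is determined by the channel and symbol and not by the noise. Linearity of expectation then gives
\begin{equation*}
\mathbb E[\hat N_0^{\mathrm{ora}}]=\frac{1}{M-k}\sum_{i\in\mathcal I_n}\mathbb E|\bar n_i|^2=N_0 .
\end{equation*}

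\textbf{Main obstacle.} The delicate point is how to read the perfect-separation assumption. If it is treated as a random event and we condition on it, the conditional law of $|\bar n_i|^2$ for $i\in\mathcal I_n$ is truncated to values below $\min_{j\in\mathcal I_s}|\bar y_j|^2$. That truncation could bias the mean downward.

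I would handle this by stating that, in the ideal regime of Lemma~\ref{lem:monotone}, the assumption is a deterministic structural condition. Its only role is to identify the sorted prefix with the fixed set $\mathcal I_n$; it is not an event we condition on. Equivalently, the signal components can be taken sufficiently strong, so that the separation event has probability one or tends to one. Under this reading the expectation in Step 2 is taken with respect to the unconditional AWGN distribution, and the argument is complete.
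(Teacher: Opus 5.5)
Your Steps 1 and 2 are exactly the paper's proof: perfect separation identifies the $M-k$ smallest sorted entries with $\mathcal I_n$, on which $\bar y_i=\bar n_i\sim\mathcal{CN}(0,N_0)$, and linearity of expectation then gives $N_0$. The conditioning issue you raise is genuine and the paper does not address it, but your ``equivalently'' is too strong: for Gaussian noise and a finite signal the separation event has probability strictly less than one, and the sum of the $M-k$ smallest entries never exceeds $\sum_{i\in\mathcal I_n}|\bar y_i|^2$, so the sorted-prefix estimator is biased low both conditionally and unconditionally; exact unbiasedness holds only for the support-oracle sum $\frac{1}{M-k}\sum_{i\in\mathcal I_n}|\bar y_i|^2$, which is what you and the paper actually compute, and letting the signal grow gives only asymptotic unbiasedness of the sorted form.
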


\begin{proof}
Please refer to Appendix~\ref{app:proofprop}.
\end{proof}

This proposition shows that, under the assumption of perfect separation between signal and noise components, the proposed estimator is unbiased. Furthermore, based on this result, the variance of the estimator can be derived, which provides further insights into its statistical characteristics.

\begin{corollary}\label{cor:sparsity}
Under the same assumptions as in Proposition \ref{prop:unbiased}, the variance of the oracle estimator is given by
\begin{equation}
    \mathrm{Var}(\hat N_0^{\mathrm{ora}})
    =
    \frac{N_0^2}{M-k}.
\end{equation}
Hence, as the sparsity increases (i.e., $k$ decreases), the variance decreases. Furthermore, as $M \to \infty$ with $k \ll M$, we have
\begin{equation}
    \mathrm{Var}(\hat N_0^{\mathrm{ora}}) \to 0,
\end{equation}
which implies that the oracle estimator becomes asymptotically consistent.
\end{corollary}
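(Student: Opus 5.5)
Under perfect separation, the $M-k$ smallest entries of $|\bar{\mathbf y}|^2_{\mathrm{sorted}}$ are exactly the entries indexed by $\mathcal I_n$, listed in ascending order. A sum does not depend on the order of its terms, so the oracle estimator can be rewritten without any order statistics:
\begin{equation}
    \hat N_0^{\mathrm{ora}} = \frac{1}{M-k}\sum_{i\in\mathcal I_n}|\bar y_i|^2 = \frac{1}{M-k}\sum_{i\in\mathcal I_n}|\bar n_i|^2 .
\end{equation}
The second equality uses $\bar x_i = 0$ for $i\in\mathcal I_n$, which holds for the ideally sparse vector. I would take this rewriting as the first step. It is exactly the reduction that makes Proposition~\ref{prop:unbiased} work, and I would reuse it here.

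Since $\mathbf F$ is unitary, $\bar{\mathbf n}\sim\mathcal{CN}(0,N_0\mathbf I_M)$. Hence $|\bar n_i|^2\overset{\mathrm{i.i.d.}}{\sim}\mathrm{Exp}(1/N_0)$, with mean $N_0$ and variance $N_0^2$. The estimator is the sample mean of $M-k$ such independent variables, so
\begin{equation}
    \mathrm{Var}(\hat N_0^{\mathrm{ora}}) = \frac{1}{(M-k)^2}\sum_{i\in\mathcal I_n}\mathrm{Var}(|\bar n_i|^2) = \frac{(M-k)N_0^2}{(M-k)^2} = \frac{N_0^2}{M-k}.
\end{equation}
This expression is decreasing in $M-k$. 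It therefore decreases as $k$ decreases, and it tends to $0$ as $M\to\infty$ with $k\ll M$, so that $M-k\to\infty$.

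Consistency then follows from unbiasedness (Proposition~\ref{prop:unbiased}) together with the vanishing variance. By Chebyshev's inequality, $\Pr(|\hat N_0^{\mathrm{ora}}-N_0|>\epsilon)\le N_0^2/((M-k)\epsilon^2)\to0$ for every $\epsilon>0$.

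The main obstacle is the step that treats $\mathcal I_n$ as a fixed set of $M-k$ i.i.d. noise entries. The separation event in the statement is itself random, and conditioning on it would distort the exponential laws. I would resolve this by reading the oracle assumption in the same way as Proposition~\ref{prop:unbiased}. There, $\mathcal I_n$ is the deterministic support complement of $\bar{\mathbf x}_{\mathrm{ideal}}$, and the estimator averages over exactly those indices. The separation hypothesis serves only to identify that average with the sorted partial sum, so the variance is computed with $\mathcal I_n$ fixed and no conditioning on the separation event is involved.

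A possible alternative route is through Lemma~\ref{lem:orderstat}, writing the sorted partial sum in terms of the spacings. I would avoid it, because order statistics of a subset mixed with signal entries are messier. The permutation-invariance argument is cleaner.
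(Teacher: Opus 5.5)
Your proposal is correct and follows essentially the same route as the paper: you rewrite $\hat N_0^{\mathrm{ora}}$ as the sample mean of the $M-k$ i.i.d.\ $\mathrm{Exp}(1/N_0)$ variables $|\bar n_i|^2$, $i\in\mathcal I_n$, read off the variance $N_0^2/(M-k)$, and combine it with unbiasedness from Proposition~\ref{prop:unbiased} to get consistency. Your Chebyshev bound is just the explicit form of the paper's ``mean-square convergence implies convergence in probability'' step, and your remark that the separation hypothesis should not be treated as a conditioning event is a fair caveat the paper passes over silently; your resolution, fixing $\mathcal I_n$ as the support complement, matches how the paper's proof implicitly proceeds.
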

\begin{proof}
    Please refer to Appendix~\ref{app:proofcor}.
\end{proof}

This indicates that estimation performance improves as the signal vector has more sparsity or as the antenna dimension $M$ increases. As these conditions are more pronounced in mmWave multi-antenna environments, the proposed approach is particularly well-suited for such settings.

\section{Simulation Results}\label{sec:simulation}
\subsection{Simulation Configuration}
For the uplink mmWave channel model, we utilize QuaDRiGa mmMAGIC UMi\cite{ref:quadriga}, which is known as a realistic mmWave channel model, with a carrier frequency of 50 GHz. We assume that the BS is equipped with a 64-element ULA. The antenna spacing of the BS is assumed to be half-wavelength, and the antenna of each user is omnidirectional. 
For the simulations, we conducted 10000 Monte-Carlo runs with regenerated channels for each trial.
Also, without loss of generality, the ground-truth average noise power is assumed to be $N_0=1$ for all simulations.

To assess the competence of the proposed algorithm, we consider baselines from recent literature, which can operate with a single signal snapshot.
For the baselines, a median absolute deviation (MAD)-based nonparametric estimator and its parametric refined variant\cite{ref:blindest}, as well as a truncated-mean-based estimator which trims outliers with the confidence interval\cite{ref:lowreschannelest}, are considered. 
Additionally, to evaluate the effectiveness of the dynamic thresholding parameter $\gamma$, we consider two variants of the proposed method: `Proposed (fixed)', which employs a fixed $\gamma$, and `Proposed (dynamic)', which uses $\gamma_1,\gamma_2$, and $\gamma_3$ depending on the sorted indices.

\subsection{Numerical Results and Analysis}
Fig.~\ref{fig:noise} illustrates the estimated average noise power as a function of SNR. 
In low-SNR regimes, the proposed algorithm slightly underestimates the noise power, whereas in high-SNR regimes, it tends to slightly overestimate it. As discussed in Lemma~\ref{lem:monotone}, this behavior arises because, under low-SNR conditions, noise components with large magnitudes are more likely to be misclassified as signal. Conversely, under high-SNR conditions, beam components with small magnitudes are more likely to be misclassified as noise.
Furthermore, the proposed method achieves higher estimation accuracy than the baseline estimator, as it performs noise power estimation after explicitly separating signal and noise components.
This improvement arises because MAD-based estimators~\cite{ref:blindest} do not perform explicit sample-wise outlier detection, leading to a positive bias when the signal sparsity is not sufficiently strong. Likewise, the ``Truncated mean''~\cite{ref:lowreschannelest} relies on confidence-interval-based trimming rather than direct sample inspection, which may also introduce bias in the estimation.
In comparison with fixed-threshold and dynamic-threshold approaches, the use of dynamic parameter selection provides a significant overall performance improvement. This result suggests that employing index-dependent thresholds enhances estimation accuracy.

\begin{figure}
    \centering
    \includegraphics[width=\linewidth]{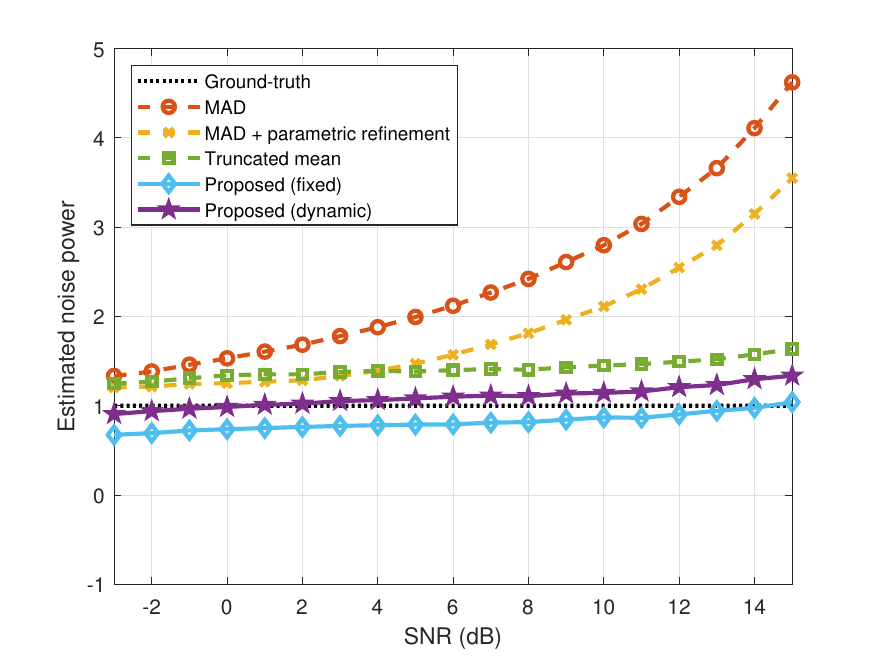}
    \caption{Estimated noise power depending on SNR.}
    \label{fig:noise}
\end{figure}

Fig.~\ref{fig:signal} shows the estimated average signal power as a function of SNR. The proposed estimators exhibit near-accurate performance across the entire SNR range, whereas the baseline methods show a slight negative offset. This discrepancy arises because the signal power estimation is influenced by the accuracy of the noise power estimator.
Fig.~\ref{fig:snr} presents the estimated SNR as a function of the ground-truth SNR. The results demonstrate that the proposed estimator achieves the highest accuracy among the considered methods. This is also because the SNR estimate depends on both the average noise power and average signal power estimates, both of which are more accurately captured by the proposed approach.
In addition, the use of a dynamic threshold parameter yields better SNR estimation performance than a fixed threshold parameter, further highlighting the advantage of index-dependent thresholding.

\begin{figure}
    \centering
    \includegraphics[width=\linewidth]{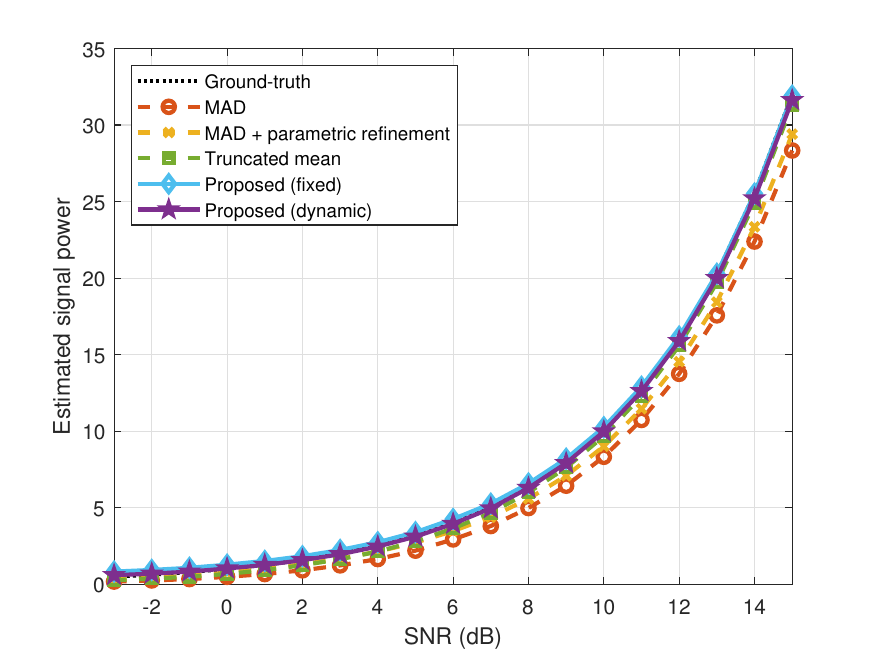}
    \caption{Estimated signal power depending on SNR.}
    \label{fig:signal}
\end{figure}

\begin{figure}
    \centering
    \includegraphics[width=\linewidth]{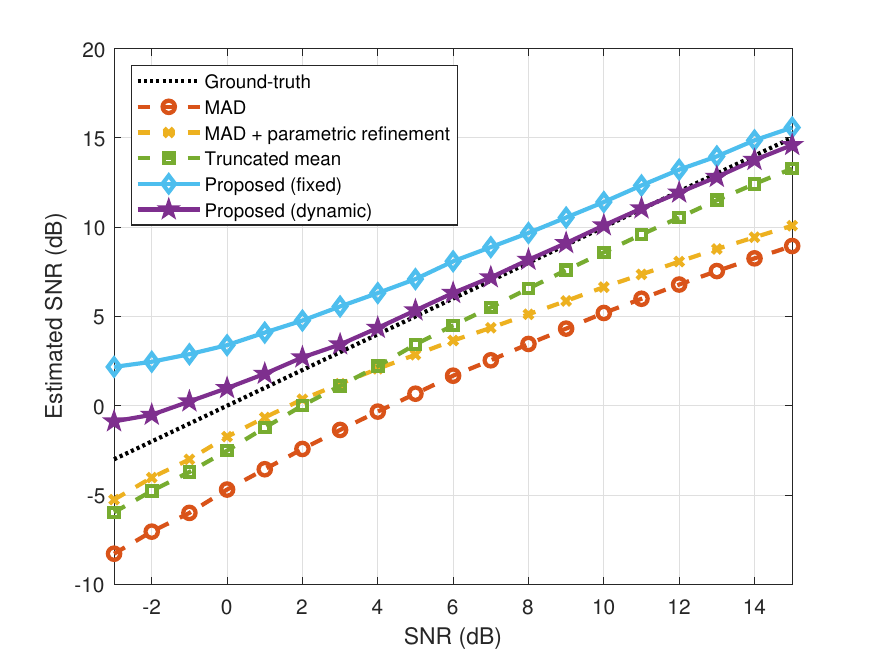}
    \caption{Estimated SNR depending on SNR.}
    \label{fig:snr}
\end{figure}

Table~\ref{tbl:simtime} presents the MATLAB operation time and the normalized time comparison. The proposed method incurs only a 10-20\% increase in computational time compared to MAD-based estimators, while achieving a shorter simulation time than that of the estimator in~\cite{ref:lowreschannelest}. This indicates that the proposed approach attains higher accuracy without a significant increase in software simulation time, while maintaining similar operation time.

\begin{table*}[]
    \centering
    \caption{MATLAB operation time and normalized time comparison}
    \label{tbl:simtime}
    \begin{tabular}{|c|c|c|c|c|}
    \hline
    Algorithm                 & MAD   & MAD + parametric refinement & Truncated mean & Proposed \\ \hline\hline
    Operation time ($\mu$s)   & 70.75 & 77.89                       & 95.07          & 88.11    \\ 
    Normalized operation time & 1.0   & 1.1                         & 1.3            & 1.2      \\ \hline
    \end{tabular}
\end{table*}

\section{Hardware Design and Implementation}\label{sec:hardware}
In this section, we propose a VLSI architecture for blind parameter estimation, which serves as a hardware implementation of the low-complexity blind estimators for the average noise power, the average signal power, and SNR, devised in Sec.~\ref{sec:proposed}. We also provide the corresponding FPGA implementation results. 

\subsection{VLSI Architecture}
Fig.~\ref{fig:highlevelvlsi} illustrates a high-level depiction of the proposed VLSI architecture, which realizes the hardware implementation of the algorithm described in the previous section.
The architecture is composed of five primary modules: \textit{(i)} an FFT unit for beamspace transformation, \textit{(ii)} an element-wise magnitude squaring operator, \textit{(iii)} a noise power estimator unit, \textit{(iv)} a signal power estimator unit, and \textit{(v)} an SNR estimator unit.
The FFT unit maps the received antenna domain signal vector $\mathbf{y}$ to its beamspace representation $\bar{\mathbf{y}}$.
The element-wise squaring operator transforms the complex beamspace signal vector $\bar{\mathbf{y}}$ into the squared magnitude vector $|\bar{\mathbf{y}}|^2$.  
The noise power estimator unit performs the average noise power estimation following the procedure in Algorithm~\ref{alg:noisepwrest}, employing sorting and threshold-based separation with corresponding units, respectively. The signal power estimator unit derives the signal power estimate using the outputs of the noise power estimator unit, namely the cumulative sum $S_M$ (equivalent to $\|\bar{\mathbf{y}}\|^2$) and the estimated noise power $\widehat{N}_0$, as shown in \eqref{eqn:signalpowerest}. The SNR estimator unit then evaluates the SNR based on these estimated parameters as shown in \eqref{eqn:snrest}. 
To achieve high throughput, the architecture adopts a streaming structure. Specifically, it processes a new signal vector and produces the corresponding estimates for the new incoming signal samples at every clock cycle. 

To improve hardware efficiency, a two’s complement fixed-point representation is adopted. The word lengths are determined through performance evaluation to achieve near floating-point accuracy while minimizing resource usage.
The antenna-domain signal elements are represented using 16 bits with 8 fractional bits, whereas the beamspace signal elements use 10 bits with 8 fractional bits due to scaling in the FFT process. The output of the element-wise squaring unit is represented with 16 bits and 8 fractional bits.
For the noise power estimate $\widehat{N}_0$, 16 bits with 8 fractional bits are used. The same format, consisting of 16 bits with 8 fractional bits, is applied to both the total noisy signal power $S_M$ and the signal power estimate $\widehat{P}_x$. Finally, the SNR estimate $\hat{\rho}$ is represented using 24 bits with 8 fractional bits.

\begin{figure*}
    \centering
    \includegraphics[width=0.8\linewidth]{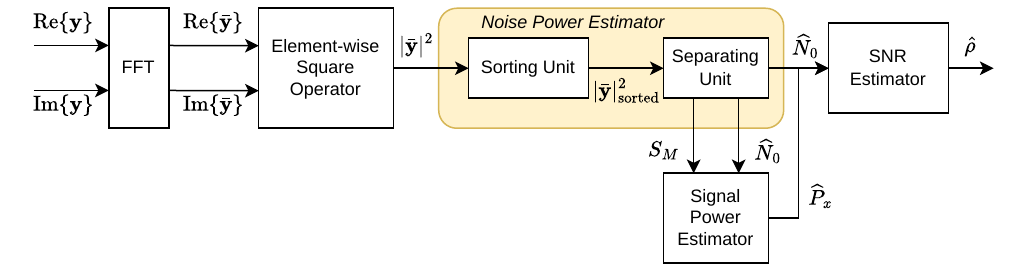}
    \caption{High-level VLSI architecture of the proposed algorithm.}
    \label{fig:highlevelvlsi}
\end{figure*}

Fig.~\ref{fig:preprocess} illustrates the architecture of the FFT and the element-wise squaring operator employed for preprocessing. For the FFT implementation, a Xilinx FFT IP core with radix-2 pipelined I/O streaming was adopted, enabling the input and output of one complex vector element per clock cycle. 
Consequently, the FFT core outputs $M$ complex beamspace elements over $M$ clock cycles in a sequential manner. These complex-valued samples are then processed by an element-wise squaring operator, which operates sequentially on the incoming data. Since the operation computes the sum of the squares of the real and imaginary components, the resulting values are always non-negative and are thus represented in an unsigned format. Also, the resulting values are generated sequentially and fed directly into the sorting unit.

\begin{figure}
    \centering
    \includegraphics[width=\linewidth]{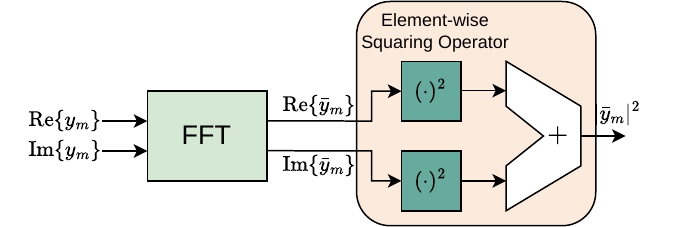}
    \caption{Architecture of the unit for preprocessing; FFT and element-wise square operator.}
    \label{fig:preprocess}
\end{figure}

Fig.~\ref{fig:noisepwrunit} depicts the architecture of the noise power estimator unit, which consists of a sorting unit and a separating unit. 
The proposed sorting unit adopts a systolic insertion-based architecture, where sorting is progressively achieved as data propagate through a cascade of registers. Its operation is controlled by a finite-state machine (FSM), which manages three phases: data loading, pipeline flushing, and sorted output generation. During the loading phase, input samples are sequentially provided at one sample per clock cycle, while each stage performs a local compare-and-swap operation with the incoming value. Specifically, the smaller value is retained in the current stage, while the larger value is forwarded to the next stage, allowing partially sorted results to form across the pipeline. After all inputs are received, the FSM initiates a flushing phase to complete data propagation, followed by an output phase where the sorted values are sequentially read out. This structure avoids long combinational paths by restricting each stage to a single comparison per cycle, enabling timing-efficient hardware implementation.

Subsequently, as illustrated in Fig.~\ref{fig:noisepwrunit}, the sorted input samples are sequentially streamed into the separating unit. 
As each sample arrives, it is used to compute the finite difference $\Delta [|\bar{\mathbf{y}}|^2_\mathrm{sorted}]_m$ with the previously stored sample. The current sample is then registered for the next comparison, while the temporary cumulative sum $S_m$ is updated by adding the previous sample. Based on these values, the comparison in \eqref{eqn:thresholdcheck} is performed. 
The threshold parameter $\gamma$ is constrained to be a power-of-two, allowing the multiplication to be efficiently implemented using bit-shifting operations. 
This procedure corresponds to lines 4 through 10 of Algorithm~\ref{alg:noisepwrest}.
When a \textit{hit} is detected, the corresponding $S_{m^*}$ is recorded and subsequently normalized to obtain the average. Instead of performing a direct division by $m^*$, which would impose significant hardware complexity, a lookup table (LUT) is employed to retrieve the precomputed value of $1/m^*$, and the normalization is carried out via multiplication. 
Finally, the architecture outputs both the total accumulated sum $S_M$, required for signal power estimation, and the estimated noise power $\widehat{N}_0$.

\begin{figure*}
    \centering
    \includegraphics[width=0.8\linewidth]{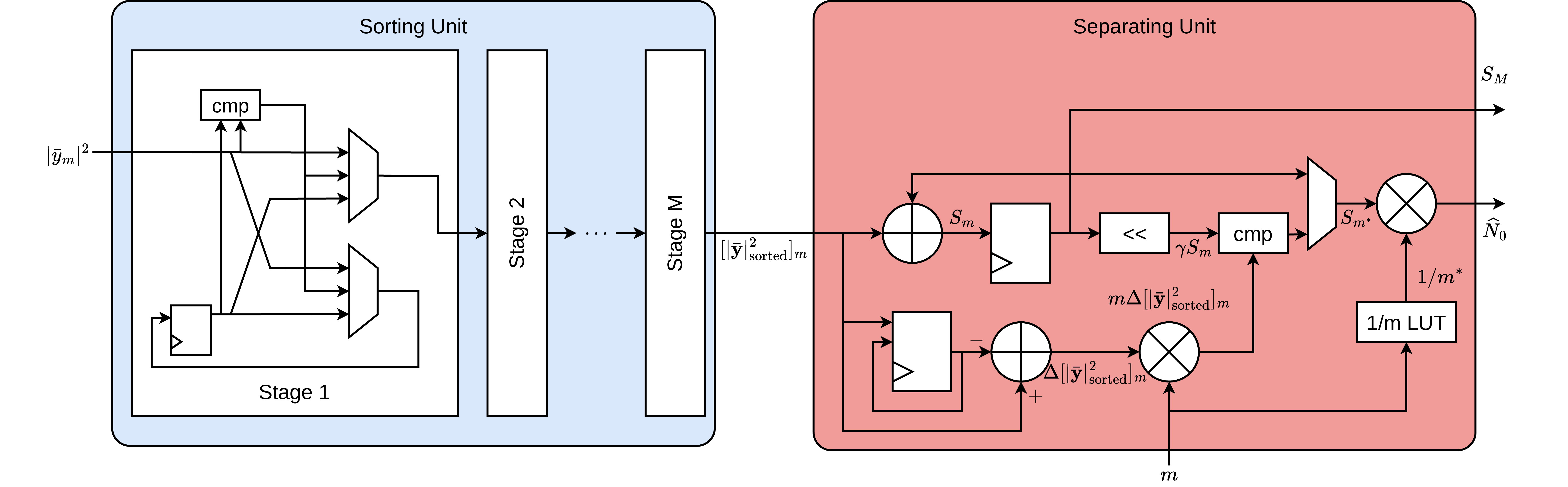}
    \caption{Architecture of the noise power estimation unit.}
    \label{fig:noisepwrunit}
\end{figure*}

Fig.~\ref{fig:sigpow_snr_vlsi} illustrates the architecture of the signal power estimator unit and the SNR estimator unit. The signal power estimator receives $S_M$, which is equivalent to $\|\bar{\mathbf{y}}\|^2$, and $\widehat{N}_0$ from the separating unit and computes the signal power estimate as defined in \eqref{eqn:signalpowerest}. In this process, the division by $M$ is implemented using a bit-shifting operation, under the assumption that the number of antennas is a power of two.
Furthermore, the SNR estimator unit performs the SNR computation defined in \eqref{eqn:snrest}, which is implemented as a simple division operation.

\begin{figure}
    \centering
    \includegraphics[width=0.8\linewidth]{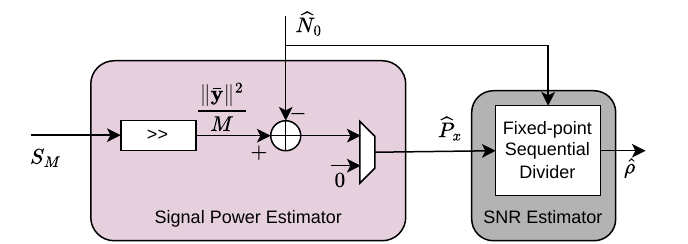}
    \caption{Architecture of the signal power estimation unit and SNR calculation unit.}
    \label{fig:sigpow_snr_vlsi}
\end{figure}

\subsection{FPGA Implementation Results}
To validate the practical effectiveness of the proposed method, the VLSI design is implemented on the AMD-Xilinx Kintex UltraScale+ FPGA KCU116 Evaluation Kit. The implementation is carried out for various numbers of BS antennas, and the results are summarized in Table~\ref{tbl:resource1}.
Although the resource utilization, including LUTs and flipflops, increases with $M$, the growth is sublinear. Also, even for $M=128$, the proposed architecture occupies only a small fraction of the available hardware resources. This indicates that the proposed algorithm achieves low complexity with low computational overhead. The required clock cycles also increase with $M$ because the beamspace vector elements are processed sequentially across the functional units. Nevertheless, the estimated latency is approximately 2.182 $\mu$s even for the largest configuration ($M=128$). Considering that the minimum 5G transmission time interval (TTI) is 125 $\mu$s and the duration of the single OFDM symbol in that configuration is approximately 8.9 $\mu$s~\cite{ref:3gpp38211,ref:3gpp38214}, the proposed architecture can estimate the parameters before the next symbol is received. Overall, the implementation results indicate that the proposed method not only enables blind estimation of multiple system parameters from a single signal sample but also demonstrates efficiency in terms of complexity, hardware resource usage, and latency.

Table~\ref{tbl:resource2} presents the breakdown of resource utilization across the processing units. Here, ``P'' denotes the preprocessing units, including the FFT and the element-wise squaring unit; ``N'' represents the noise power estimation units, consisting of the sorting and separating units; and ``S'' corresponds to the signal power estimator and the SNR estimator units. The overall latency is primarily dominated by the preprocessing and noise power estimation stages. In the preprocessing stage, the FFT accounts for a significant portion of the hardware resources. In the noise power estimation stage, both the sorting and separating units contribute substantially to the latency. Moreover, as the dimension $M$ increases, the latency of these units also grows, while the hardware resource utilization exhibits a sublinear scaling trend.
On the other hand, the signal power estimator and the SNR estimator units exhibit nearly identical operational structures across different $M$, resulting in similar hardware resource utilization and identical clock cycle requirements across all configurations.

\begin{table*}[] 
    \caption{FPGA hardware resource utilization results for different $M$ implemented on the Xilinx Kintex UltraScale+ FPGA KCU116 Evaluation Kit}\label{tbl:resource1}
    {\centering
    \begin{tabular}{|c|c|c|c|c|}
    \hline
    The number of BS antennas $M$ & 16             & 32             & 64 & 128  \\ \hline\hline
    LUTs                          & 1499 (0.69\%)  & 2291 (1.06\%) & 3851 (1.77\%)  & 5711 (2.63\%)    \\ 
    LUTs as logic                 & 1391 (0.64\%)  & 2142 (0.99\%) & 3662 (1.69\%) &  5448 (2.51\%)    \\ 
    LUTs as memory                &  108 (0.11\%)  & 149 (0.15\%)  & 189 (0.18\%) &  263 (0.26\%)    \\ 
    Flipflops                     & 2032  (0.47\%) & 2940 (0.68\%) & 4288 (0.99\%) &  6860 (1.58\%)  \\ 
    DSP48 units                   &    4  (0.22\%) &  7 (0.38\%)   &  7 (0.38\%)  & 10 (0.54\%)    \\ 
    Maximum clock frequency (MHz) & 440 &  440  & 359 & 357  \\ 
    Latency (clock cycles)        & 180 &  280  & 453 & 779 \\ 
    Latency ($\mu$s)              & 0.409  & 0.636 & 1.262 & 2.182 \\ 
    Power consumption$^a$ (W)         & 0.036  & 0.047 & 0.065 & 0.097 \\ 
    Maximum throughput$^b$ (Mvectors/s)       & 27.500  & 13.750 & 5.609 & 2.789 \\
    Maximum throughput/LUTs               & 18345   & 6002 & 1090  & 244 \\
    \hline
    \end{tabular}\par}
    \vspace{2mm}
    {\footnotesize
    $^{a}$The power consumption is obtained from the statistical dynamic power estimation at the maximum clock frequency under a supply voltage of 0.85 V. \\
    $^{b}$The maximum throughput is obtained in million vectors processed per second and is calculated as $f_\mathrm{max}/M$, where $f_\mathrm{max}$ denotes the maximum clock frequency.}
\end{table*}

\begin{table*}[]
    \centering
    \caption{FPGA hardware resource utilization breakdown for different $M$ implemented on the Xilinx Kintex UltraScale+ FPGA KCU116 Evaluation Kit }\label{tbl:resource2}
    \begin{tabular}{|c|ccc|ccc|ccc|ccc|}
    \hline
    $M$                    & \multicolumn{3}{c|}{16}               & \multicolumn{3}{c|}{32}               & \multicolumn{3}{c|}{64}               & \multicolumn{3}{c|}{128}              \\ 
    Modules                & P & N & S & P & N & S & P & N & S & P & N & S \\ \hline\hline
    LUTs                   & 692   & 703       & 135              & 862   & 1315       & 138              & 1025   & 2722       & 135              & 1244  & 4347       & 143              \\ 
    LUTs as logic          & 584   & 703       & 135              & 713   & 1315       & 138              & 836   & 2722       & 135              & 981   & 4347       & 143              \\ 
    LUTs as memory         & 108   & 0          & 0                & 149   & 0          & 0                & 189   & 0          & 0                & 263   & 0          & 0                \\ 
    Flipflops              & 1037  & 742        & 253              & 1382  & 1296       & 262              & 2248  & 2399       & 277              & 1985  & 4582       & 293              \\ 
    DSP48 units            & 3     & 1          & 0                & 6     & 1          & 0                & 6     & 1          & 0                & 9     & 1          & 0                \\ 
    Latency (clock cycles) & 84    & 57         & 39               & 143   & 98        & 39               & 213   & 201        & 39               & 351   & 389        & 39               \\ \hline
    \end{tabular}
\end{table*}

To the best of our knowledge, there are no existing hardware implementations of blind estimators for noise power, signal power, and SNR in multi-antenna communication systems that allow for a direct and fair comparison. 
While such directly comparable hardware implementations of single-sample-based estimators for multi-antenna systems do not exist, several prior studies have investigated hardware realizations of blind noise power and SNR estimation algorithms for other systems.
For example, the work in \cite{ref:blindsnrhw} proposed a blind SNR estimator capable of operating with a single sample by exploiting modulation-specific properties for single-antenna systems, and demonstrated its implementation on both FPGA and application-specific integrated circuit (ASIC) platforms. However, detailed results for FPGA synthesis and the chip measurements were not reported. In addition, the estimator exhibits limited performance when operating with a single sample and requires multiple samples to achieve stable performance. Furthermore, it is not designed for multi-antenna receivers. In contrast, the proposed method achieves high estimation accuracy even with a single sample, while maintaining low algorithmic complexity, hardware resource utilization, and latency.

\section{Conclusion}\label{sec:conclusion}
In this paper, we proposed a low-complexity blind estimator for the average noise power, average signal power, and SNR tailored to mmWave massive multi-antenna uplink systems. The key contribution lies in enabling reliable parameter estimation from only a single received signal sample, eliminating the need for pilot signaling, iterative processing, training phases for machine learning, or multiple observations. 
By exploiting beamspace sparsity inherent to mmWave channels, the proposed method performs implicit separation of signal and noise components through a sorting-based mechanism with a finite-difference criterion. In particular, the method leverages the order statistics of noise power under Gaussian assumptions to justify the separation process and enable statistically grounded estimation. This leads to accurate parameter estimation with minimal computational overhead. In addition to its algorithmic simplicity, the proposed approach is explicitly designed for hardware efficiency, resulting in a streamlined architecture that avoids complex operations and is well-suited for real-time implementation. To validate its practical viability, we developed a VLSI architecture and demonstrated its effectiveness through FPGA implementation on an AMD-Xilinx Kintex UltraScale+ platform. The implementation results confirm that the proposed estimator achieves low-latency operation within a time shorter than a single OFDM symbol while exhibiting favorable scalability in hardware resource usage. 
Overall, this work establishes that accurate and real-time blind parameter estimation is achievable with minimal complexity, while consistently achieving higher estimation accuracy than existing single-sample-based methods, making the proposed approach a strong candidate for mmWave multi-antenna systems.

\appendices

\section{Proof of Lemma~\ref{lem:monotone}}\label{app:prooflemma}
Since $\left[|\bar{\mathbf{y}}|^2_{\mathrm{sorted}}\right]_m>0$ for all $m$ and is monotonically non-decreasing, we have
\begin{equation}
\begin{aligned}
    \hat{\mu}_m & = \frac{1}{m}\left((m-1)\hat{\mu}_{m-1} + \left[|\bar{\mathbf{y}}|^2_{\mathrm{sorted}}\right]_m\right) \\
    & = \frac{m-1}{m}\hat{\mu}_{m-1} + \frac{\left[|\bar{\mathbf{y}}|^2_{\mathrm{sorted}}\right]_m}{m} \\
    & = \hat{\mu}_{m-1} - \frac{1}{m}\hat{\mu}_{m-1}+ \frac{\left[|\bar{\mathbf{y}}|^2_{\mathrm{sorted}}\right]_m}{m} \\
    & \geq \hat{\mu}_{m-1},
\end{aligned}
\end{equation}
where the last inequality holds because
\begin{equation}
    -\frac{1}{m}\hat{\mu}_{m-1} + \frac{\left[|\bar{\mathbf{y}}|^2_{\mathrm{sorted}}\right]_m}{m} \geq 0,
\end{equation}
which follows from
\begin{equation}
\begin{aligned}
    \hat{\mu}_{m-1} & = \frac{1}{m-1}\sum_{m'=1}^{m-1} \left[|\bar{\mathbf{y}}|^2_{\mathrm{sorted}}\right]_{m'} \\
    & = \frac{1}{m-1}\big( \left[|\bar{\mathbf{y}}|^2_{\mathrm{sorted}}\right]_{1} + \cdots + \left[|\bar{\mathbf{y}}|^2_{\mathrm{sorted}}\right]_{m-1} \big) \\
    & \leq \frac{1}{m-1}
\left(
\underbrace{
\left[|\bar{\mathbf{y}}|^2_{\mathrm{sorted}}\right]_m + \cdots + \left[|\bar{\mathbf{y}}|^2_{\mathrm{sorted}}\right]_m
}_{(m-1) \text{ terms}}
\right) \\
    & = \left[|\bar{\mathbf{y}}|^2_{\mathrm{sorted}}\right]_m.
\end{aligned}
\end{equation}
Hence, $\hat{\mu}_m$ is monotonically non-decreasing in $m$. Therefore, since $\hat{N}_0=\hat{\mu}_{m^*}$ and $\hat{N}_0^{\mathrm{ora}}=\hat{\mu}_{m_0}$, the stated inequalities follow directly. \qed

\section{Proof of Proposition~\ref{prop:unbiased}}\label{app:proofprop}
Under perfect separation, the first $(M-k)$ elements of $|\bar{\mathbf y}|^2_{\mathrm{sorted}}$ correspond exactly to the noise-only components. Hence,
\begin{equation}
    \hat N_0^{\mathrm{ora}}
    =
    \frac{1}{M-k}\sum_{i\in\mathcal I_n} |\bar y_i|^2.
\end{equation}
For every $i\in\mathcal I_n$, we have $\bar y_i=\bar n_i$. Since the beamspace noise remains AWGN after the unitary DFT transformation, each $\bar n_i$ satisfies
\begin{equation}
    \bar n_i \sim \mathcal{CN}(0,N_0),
\end{equation}
which implies
\begin{equation}
    \mathbb E[|\bar n_i|^2]=N_0.
\end{equation}
Therefore,
\begin{equation}
    \mathbb E[\hat N_0^{\mathrm{ora}}]
    =
    \frac{1}{M-k}\sum_{i\in\mathcal I_n}\mathbb E[|\bar n_i|^2]
    =
    \frac{1}{M-k}(M-k)N_0
    =
    N_0.
\end{equation}
This completes the proof. \qed

\section{Proof of Corollary~\ref{cor:sparsity}}\label{app:proofcor}
From the proof of Proposition \ref{prop:unbiased},
\begin{equation}
    \hat N_0^{\mathrm{ora}}
    =
    \frac{1}{M-k}\sum_{i\in\mathcal I_n} |\bar n_i|^2,
\end{equation}
where $|\bar n_i|^2$ are i.i.d. exponential random variables with variance $N_0^2$. Thus,
\begin{equation}
\begin{aligned}
    \mathrm{Var}(\hat N_0^{\mathrm{ora}})
    & =
    \frac{1}{(M-k)^2}\sum_{i\in\mathcal I_n}\mathrm{Var}(|\bar n_i|^2)
    \\& =
    \frac{1}{(M-k)^2}(M-k)N_0^2
    \\& =
    \frac{N_0^2}{M-k}.
\end{aligned}
\end{equation}
Since $(M-k)$ increases as $k$ decreases, the variance decreases as the sparsity increases. 

Now, as $M \to \infty$, if the fraction of signal-bearing components does not converge to one, i.e., if
\begin{equation}
    \frac{k}{M} \to \alpha
    \qquad \text{for some } \alpha \in [0,1),
\end{equation}
then
\begin{equation}
    M-k
    =
    M\left(1-\frac{k}{M}\right)
    \sim (1-\alpha)M \to \infty.
\end{equation}
Hence,
\begin{equation}
    \mathrm{Var}(\hat N_0^{\mathrm{ora}})
    =
    \frac{N_0^2}{M-k}
    \to 0.
\end{equation}

Since Proposition 1 shows that the estimator is unbiased, we have
\begin{equation}
    \mathbb E\!\left[(\hat N_0^{\mathrm{ora}}-N_0)^2\right]
    =
    \mathrm{Var}(\hat N_0^{\mathrm{ora}})
    \to 0.
\end{equation}
Therefore, $\hat N_0^{\mathrm{ora}}$ converges in mean square, and hence in probability, to $N_0$, which establishes asymptotic consistency. \qed

\ifCLASSOPTIONcaptionsoff
  \newpage
\fi



\bibliographystyle{IEEEtran}
\bibliography{bibtex/bib/IEEEabrv,bibtex/bib/references}
\end{document}